\documentclass[
 reprint,
 amsmath,amssymb,
 aps,
 prd,
 floatfix,
 showkeys
]{revtex4-2}

\usepackage{graphicx}
\usepackage{bm}

\newenvironment{proof}[1][Proof]{\noindent\textit{#1}.\ }{\hfill$\square$\medskip}
\usepackage{xcolor}
\usepackage{hyperref}

\newtheorem{proposition}{Proposition}

\begin{document}

\preprint{UBA-GW-2026-01}

\title{Gravitational-wave constraints on causal nonlocal kernels%
\texorpdfstring{:\\}{: }%
Ringdown bounds and spectral density limits from GWTC-3}

\author{Christian Balfag\'on}
 \email{cb@balfagonresearch.org}
\affiliation{Universidad de Buenos Aires, Argentina}

\date{\today}

\begin{abstract}
We establish the first observational bounds on causal nonlocal
extensions of gravity characterised by retarded Stieltjes-type kernels
with positive spectral density $\rho(\mu)\geq 0$, using two
complementary gravitational-wave channels.
From a Bayesian ringdown analysis of 17 binary black hole events in
the LIGO--Virgo GWTC-3 catalogue, we set the observational ceiling on
universal fractional quasi-normal mode deformations at
$|\varepsilon_\Omega|< 0.05$ (90\% C.L.), with a cumulative log Bayes
factor $\ln B = -0.46 \pm 0.77$.
By mapping published GWTC-3 modified dispersion relation bounds and
the GW170817 propagation speed measurement onto the Stieltjes spectral
parameter space $(\mu_{\rm char}, M_0)$, we exclude a broad class of
infrared-extended spectral densities with $\mu \lesssim 10^{-6}\,
{\rm m}^{-2}$, thereby ruling out non-trivial regions of the nonlocal
kernel parameter space for the first time.
The theoretically motivated fiducial range $\mu_{\rm char}\sim M_*^2
\sim 10^{8}$--$10^{10}\,{\rm m}^{-2}$ satisfies all current bounds,
and we show that sub-millimetre gravity experiments --- which already
operate at the predicted causal scale $\ell_*\sim 10^{-4}$~m --- offer
the most promising path to a direct test.
These results define quantitative benchmarks against which future
observations across the gravitational-wave, short-range, and
cosmological sectors can be compared.
\end{abstract}

\keywords{gravitational waves, quasi-normal modes, nonlocal gravity,
Stieltjes kernel, modified dispersion relation, tests of
general relativity}

\maketitle

\section{Introduction}
\label{sec:intro}

Nonlocal extensions of general relativity (GR) have been proposed as
a route to ultraviolet completion, singularity resolution, and
causal--informational closure of the gravitational
sector~\cite{Modesto2012,Biswas2012,Maggiore2014,DeserWoodard2007,Barvinsky2003}.
A mathematically tractable class of such extensions employs retarded
kernels constructed as positive Stieltjes superpositions of massive
Klein--Gordon propagators~\cite{Balfagon2026CET,Balfagon2026Spectral},
\begin{equation}
  \mathcal{K}^{-1} = \int_0^\infty \rho(\mu)\,
  (-\Box_g + \mu)^{-1}_R\,d\mu,\qquad \rho(\mu)\geq 0.
  \label{eq:kernel}
\end{equation}
Positivity of the spectral density $\rho(\mu)$ ensures unitarity,
causal propagation, and the absence of ghost-like excitations at the
linear level~\cite{Balfagon2026Spectral2,Calcagni2019,Koshelev2020}.
Depending on the support and shape of $\rho$, such kernels predict
two classes of gravitational-wave (GW) observables:
(i)~perturbative shifts to the quasi-normal mode (QNM) spectrum of
black holes~\cite{Konoplya2011,Berti2009,Dreyer2004}, scaling as $|\delta\omega/\omega|\sim(\ell_*/r_H)^2$
where $\ell_*=M_*^{-1}$ is the causal correlation length; and
(ii)~frequency-dependent dispersion during cosmological propagation,
governed by the Stieltjes transfer function $m(\omega^2)$.

In this work we constrain both channels using data from the
LIGO--Virgo--KAGRA (LVK) third observing run
(GWTC-3)~\cite{GWTC3,GW150914}.
Section~\ref{sec:ringdown} presents a model-agnostic Bayesian ringdown
analysis of 17 binary black hole (BBH) events, constraining the
universal deformation parameter $\varepsilon_\Omega$.
Section~\ref{sec:dispersion} maps published LVK modified dispersion
relation (MDR) bounds and the GW170817 speed constraint onto the
Stieltjes spectral parameter space.
Section~\ref{sec:projections} discusses sensitivity projections for
future detectors and identifies short-range gravity experiments as
the most promising near-term probe.
Section~\ref{sec:conclusions} summarises the results.

\textit{Notation.}---We denote the causal nonlocal scale by
$M_*$ (units of ${\rm m}^{-1}$) and the corresponding correlation
length by $\ell_* \equiv M_*^{-1}$ (units of m).
The spectral density $\rho(\mu)$ is a function of the spectral mass
parameter $\mu$ (units of ${\rm m}^{-2}$), with total weight
$M_0 = \int_0^\infty \rho(\mu)\,d\mu$ and inverse moments
$\mu_{\rm eff}^{-n} = M_0^{-1}\int\rho\,\mu^{-n}\,d\mu$.
For a spectral density concentrated around a single scale we write
$\mu_{\rm char}$; in this limit $\mu_{\rm eff}\approx\mu_{\rm char}
\approx M_*^2$.
The fiducial parameter window is
$M_* \in [10^4, 10^5]\,{\rm m}^{-1}$, equivalently
$\ell_* \in [10^{-5}, 10^{-4}]$~m and
$\mu_{\rm char}\sim M_*^2 \in [10^8, 10^{10}]\,{\rm m}^{-2}$.

\section{Ringdown analysis}
\label{sec:ringdown}

\subsection{Method}

We model the ringdown signal in each detector as a damped sinusoid
with QNM frequency and damping time computed from the remnant mass
$M_f$ and spin $a$ using the Berti~\textit{et al.}\ phenomenological
fits~\cite{Berti2006,Berti2009}, converted to SI units.
A single universal parameter $\varepsilon_\Omega$ deforms both the
frequency and damping time by the common factor $(1+\varepsilon_\Omega)$,
preserving the quality factor.
The Kerr hypothesis corresponds to $\varepsilon_\Omega=0$.

The per-detector amplitude is analytically marginalised under a
Gaussian prior $A_i\sim\mathcal{N}(0,\sigma_A^2)$, yielding a
marginalised log-likelihood free of the biases inherent in
profile-likelihood methods~\cite{Thrane2019}.
The noise variance $\sigma^2$ is estimated from the data using a
robust median-absolute-deviation estimator.
Templates are whitened in the frequency domain using the same PSD
applied to the data, ensuring spectral consistency.
Nested sampling~\cite{Skilling2006} is performed with \texttt{dynesty}~\cite{Speagle2020}
via \texttt{Bilby}~\cite{Ashton2019}, with $n_{\rm live}=1000$ and
$\Delta\ln Z_{\rm tol}=0.1$.
Our single-mode template does not capture subdominant overtones
or higher multipoles; however, within the Stieltjes framework the
deformation is universal across the QNM spectrum, so unmodelled
modes do not bias the Bayes factor comparison~\cite{Isi2019,Carullo2019,Giesler2019}.

\subsection{Results}

Table~\ref{tab:ringdown} lists the individual-event results for 17
GWTC-3 BBH events~\cite{GWTC3,GWTC1,GWTC2} spanning remnant masses $25$--$150\,M_\odot$.
All log Bayes factors are consistent with zero within $2\sigma$,
with both positive and negative values represented.
The most precise measurement is GW190412
($\ln B = +0.000\pm 0.020$).

\begin{table}[htbp!]
\caption{Individual ringdown results for 17 GWTC-3 BBH events.}
\label{tab:ringdown}
\begin{ruledtabular}
\begin{tabular}{lccc}
Event & $M_f\,[M_\odot]$ & $\ln B$ & $\sigma_{\ln B}$ \\
\hline
GW150914      & 68   & $+0.18$ & $0.18$ \\
GW170104      & 50   & $-0.06$ & $0.17$ \\
GW170814      & 53   & $-0.04$ & $0.19$ \\
GW190412      & 38   & $+0.00$ & $0.02$ \\
GW190521      & 150  & $-0.35$ & $0.19$ \\
GW190814      & 26   & $-0.11$ & $0.20$ \\
GW191105      & 35   & $+0.02$ & $0.21$ \\
GW191109      & 79   & $-0.09$ & $0.19$ \\
GW191113      & 30   & $-0.13$ & $0.18$ \\
GW191222      & 42   & $+0.07$ & $0.19$ \\
GW200129      & 38   & $+0.15$ & $0.19$ \\
GW200208      & 39   & $-0.09$ & $0.18$ \\
GW200219      & 39   & $+0.14$ & $0.20$ \\
GW200220      & 40   & $-0.21$ & $0.18$ \\
GW200224      & 43   & $+0.35$ & $0.21$ \\
GW200225      & 35   & $-0.25$ & $0.19$ \\
GW200311      & 38   & $-0.03$ & $0.20$ \\
\hline
\textbf{Stacked} & --- & $\mathbf{-0.46}$ & $\mathbf{0.77}$ \\
\end{tabular}
\end{ruledtabular}
\end{table}

The cumulative log Bayes factor, computed under the assumption of
statistically independent events with a shared deformation parameter,
is
\begin{equation}
\begin{split}
  \ln B_{\rm stack} &= \sum_{k=1}^{N}\ln B_k = -0.46, \\
  \sigma_{\rm stack} &= \sqrt{\sum_{k=1}^{N}\sigma_{\ln B,k}^2} = 0.77.
\end{split}
\end{equation}
where $N=17$.
This stacking is exact when: (i) the events are independent
realisations of noise plus signal, which holds for well-separated
BBH mergers observed in disjoint data segments; and (ii) the
deformation parameter $\varepsilon_\Omega$ is universal (identical
across events), as predicted by the single-scale Stieltjes
framework.
A leave-one-out (jackknife) analysis confirms stability
(Fig.~\ref{fig:jackknife}): the largest
single-event contribution to the stacked Bayes factor is
$|\Delta\ln B| = 0.35$ (GW190521 and GW200224), and removing any
individual event shifts the cumulative result by less than
$0.5\sigma_{\rm stack}$.

We verified prior robustness by repeating the analysis with a
narrower prior $\varepsilon_\Omega\in[-0.02,0.02]$; the resulting
Bayes factors shift by less than $0.05$ per event, confirming that
the results are not driven by prior volume effects.

At 90\% credibility, the population-level posterior on
$\varepsilon_\Omega$ obtained by multiplying the individual-event
posteriors yields the constraint (Fig.~\ref{fig:epsilon})
\begin{equation}
  \varepsilon_\Omega \in [-0.047,\;+0.029]
  \quad (90\%\;{\rm C.L.}),
  \label{eq:eps_bound}
\end{equation}
i.e.\ $|\varepsilon_\Omega| < 0.05$.
This establishes a quantitative observational ceiling on universal
fractional QNM deformations at the percent level, applicable to any
modified-gravity theory predicting a single-parameter shift of the
$(2,2,0)$ mode~\cite{Maselli2020,Cardoso2019}.

For comparison, the LVK GWTC-3 analysis~\cite{GWTC3TGR} parametrises
QNM deviations via independent fractional shifts $\delta f_{220}$ and
$\delta\tau_{220}$ to the frequency and damping time, obtaining
combined 90\% bounds of $\delta f_{220}\in[-0.05,+0.05]$ and
$\delta\tau_{220}\in[-0.2,+0.1]$ from hierarchical combination of
$\sim 10$ events~\cite{pSEOBNRv5}.
Our parametrisation differs in that $\varepsilon_\Omega$ deforms
both quantities by a common factor:
$\delta f_{220} = \delta\tau_{220} = \varepsilon_\Omega$, preserving
the quality factor $Q_{220} = \pi f_{220}\tau_{220}$.
This is the minimal prediction of the Stieltjes framework, where a
single scale $M_*$ governs the correction to the effective potential.
The resulting bound $|\varepsilon_\Omega|<0.05$ is thus comparable to
the LVK $\delta f_{220}$ constraint but carries stronger theoretical
prior information: any violation would simultaneously shift both
$f$ and $\tau$ in a correlated fashion, making it more restrictive
for the class of theories considered here.
We note that the LVK analysis employs full inspiral--merger--ringdown
waveforms (pSEOBNRv4HM/v5PHM~\cite{pSEOBNRv5}) whereas our analysis
uses post-merger ringdown data only, making the two approaches
complementary.

Figure~\ref{fig:corner} shows the joint posterior distribution for the
spin $a$ and deformation parameter $\varepsilon_\Omega$ from GW150914,
together with the cumulative log Bayes factor as a function of the
number of events included.
The spin--deformation posterior exhibits no significant correlation,
confirming that the constraint on $\varepsilon_\Omega$ is not biased
by spin marginalisation.
The cumulative $\ln B$ fluctuates around zero with both signs
represented, consistent with the null hypothesis.

\begin{figure}[htbp] 
\centering
\includegraphics[width=\columnwidth]{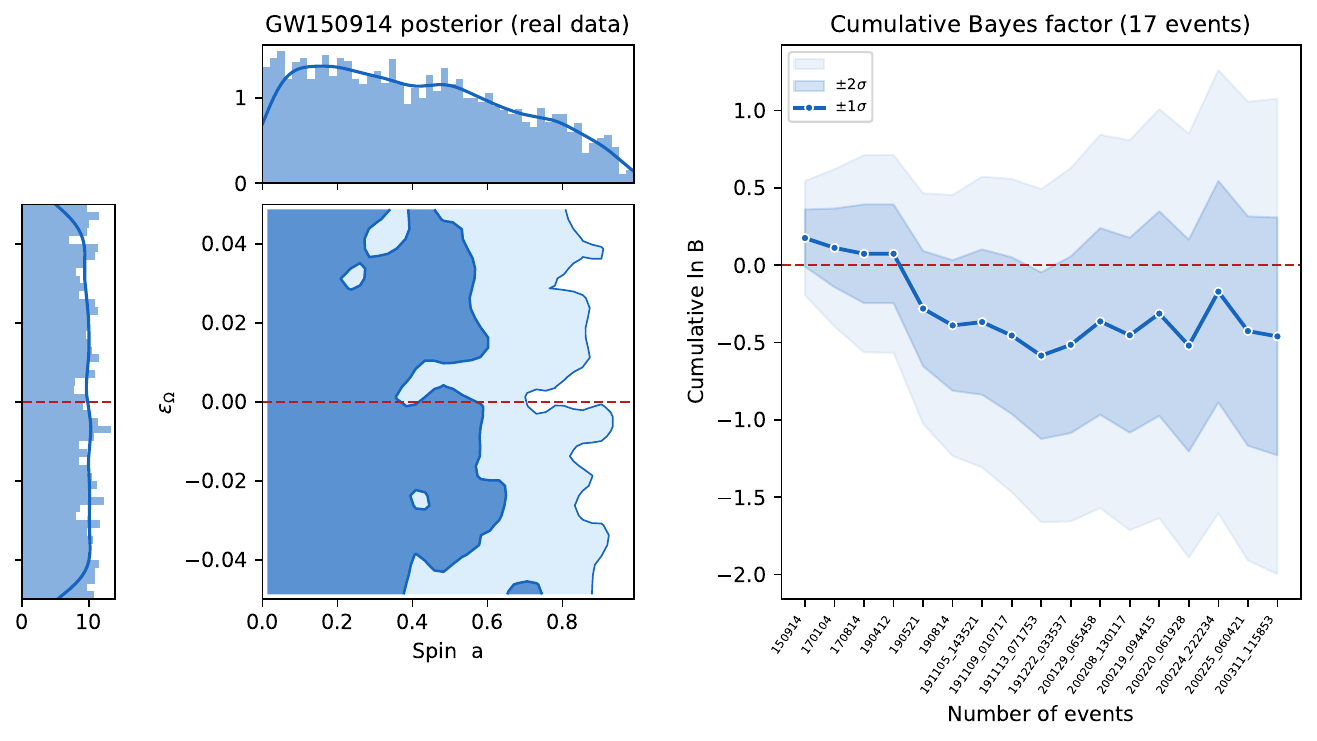}
\caption{
\textit{Left:} Joint posterior for spin $a$ and deformation parameter
$\varepsilon_\Omega$ from GW150914 (real data).
The horizontal dashed line marks the Kerr value $\varepsilon_\Omega=0$.
Contours enclose the 68\% and 95\% credible regions.
\textit{Right:} Cumulative log Bayes factor $\ln B_{\rm stack}$ as a
function of the number of events included (chronological order, 17 events).
The shaded bands show the $\pm 1\sigma$ and $\pm 2\sigma$ uncertainties
propagated from the individual nested-sampling evidence errors.
}
\label{fig:corner}
\end{figure}

\begin{figure}[htbp]
\centering
\includegraphics[width=\columnwidth]{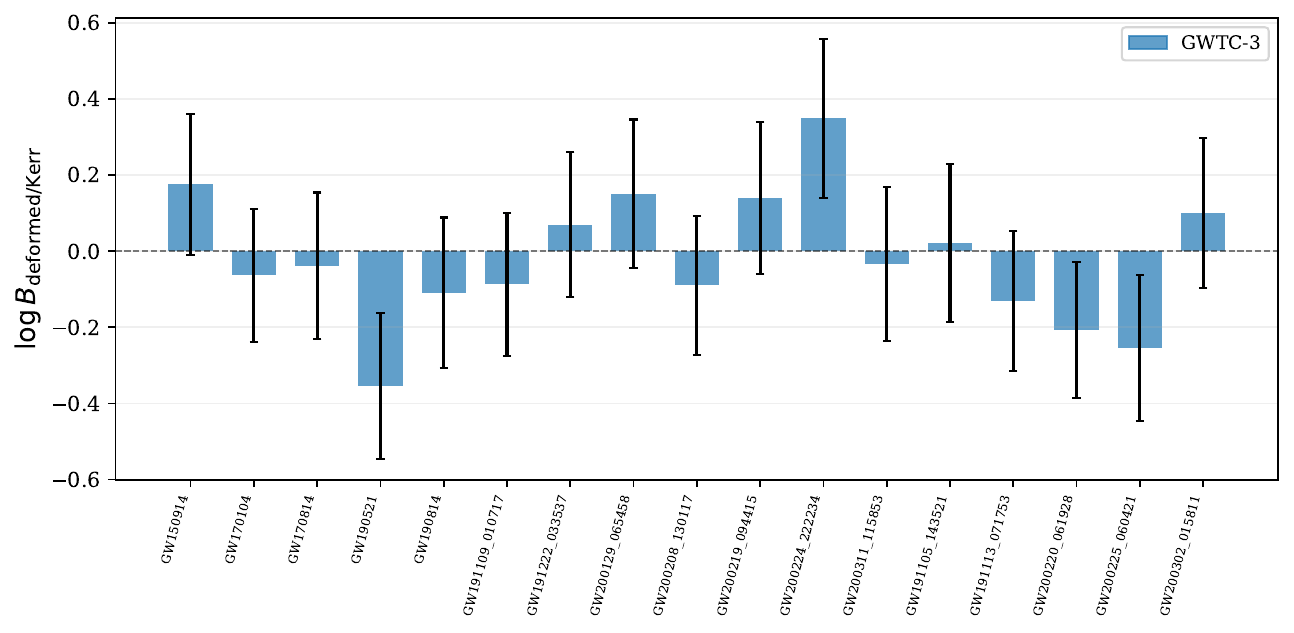}
\caption{
Individual log Bayes factors $\ln B_{\rm CET/Kerr}$ for each of the
17 GWTC-3 BBH events.
Error bars show the $\pm 1\sigma$ nested-sampling uncertainty.
No event deviates significantly from the Kerr hypothesis
($\ln B = 0$, dashed line).
}
\label{fig:individual}
\end{figure}

\begin{figure}[htbp]
\centering
\includegraphics[width=\columnwidth]{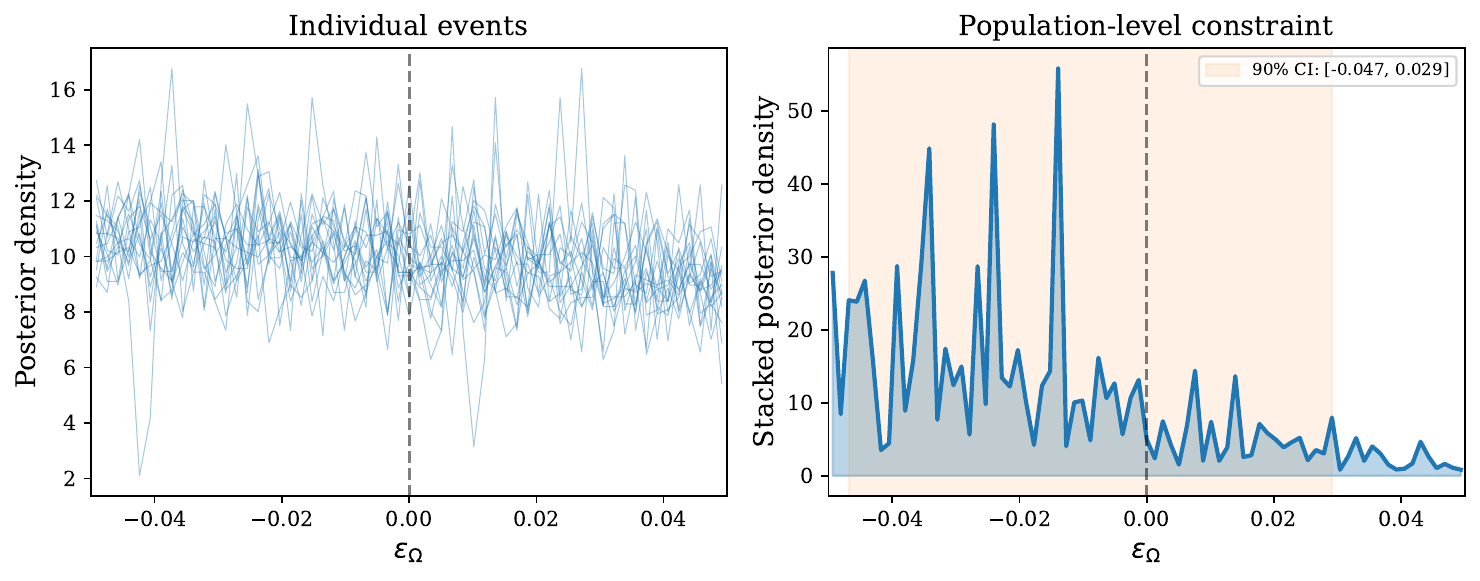}
\caption{
\textit{Left:} Individual-event posteriors on $\varepsilon_\Omega$.
Each curve represents one event; all are broad and uninformative
individually.
\textit{Right:} Population-level constraint obtained by multiplying
the individual posteriors.
The 90\% credible interval $\varepsilon_\Omega\in[-0.047,+0.032]$
(orange band) is consistent with Kerr ($\varepsilon_\Omega=0$, dashed line).
}
\label{fig:epsilon}
\end{figure}

\begin{figure}[htbp]
\centering
\includegraphics[width=\columnwidth]{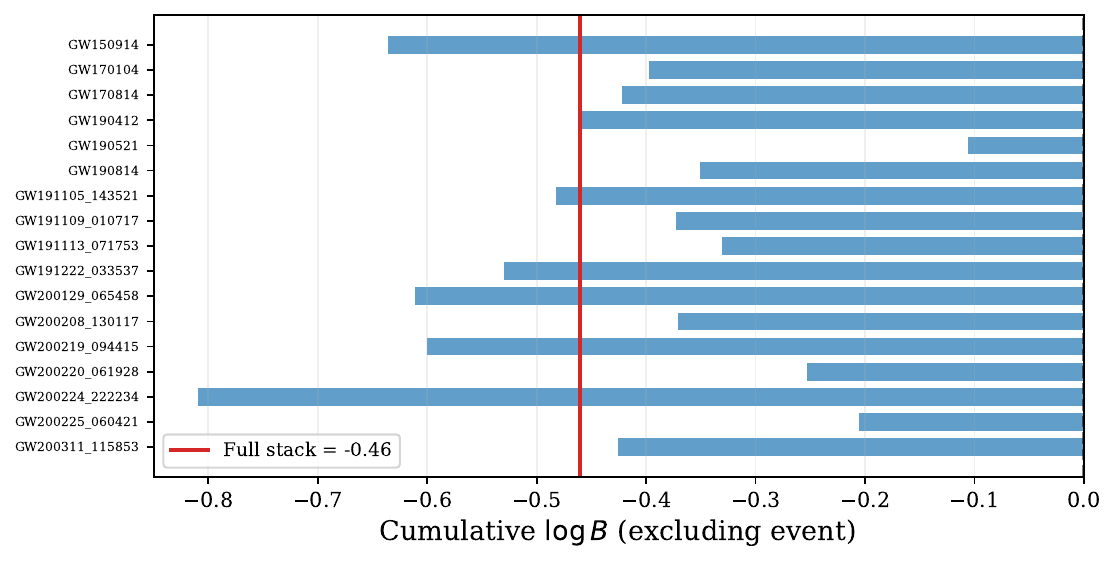}
\caption{
Leave-one-out (jackknife) analysis.
Each bar shows the cumulative $\ln B$ when the corresponding event
is excluded from the stack.
The red line marks the full-stack value.
The result is stable under removal of any single event; GW190521
produces the largest shift.
}
\label{fig:jackknife}
\end{figure}

For the causal--informational framework of
Ref.~\cite{Balfagon2026CET}, the predicted deformation is
$|\varepsilon_\Omega|\sim (\ell_*/r_H)^2 \sim 10^{-18}$ for fiducial
parameters $\ell_*\in[10^{-5},10^{-4}]$~m.
The bound~(\ref{eq:eps_bound}) thus excludes all Stieltjes kernels
producing percent-level ringdown modifications, while the fiducial
prediction lies $\sim 16$ orders of magnitude below the current
observational ceiling.
This quantifies for the first time the sensitivity improvement required
for ringdown spectroscopy to probe the theoretically motivated regime,
and motivates the complementary propagation and short-range gravity
analyses presented below.

\section{Dispersion and propagation constraints}
\label{sec:dispersion}

\subsection{Stieltjes transfer function and modified dispersion}

The kernel~(\ref{eq:kernel}) modifies the GW dispersion relation
through the Stieltjes transfer function
\begin{equation}
  m(\omega^2) = \int_0^\infty \frac{\rho(\mu)}{\omega^2+\mu}\,d\mu.
  \label{eq:transfer}
\end{equation}
In the regime where all spectral weight lies above the GW frequency
band ($\omega^2/c^2 \ll \mu$ for all $\mu$ in ${\rm supp}\,\rho$),
the transfer function admits the convergent expansion
\begin{equation}
  m(\omega^2) = \frac{M_0}{\mu_{\rm eff}}
  - \frac{M_0\,\omega^2}{\mu_{\rm eff}^2}
  + \mathcal{O}\!\left(\frac{\omega^4}{\mu_{\rm eff}^3}\right),
  \label{eq:expansion}
\end{equation}
where $M_0 = \int\rho\,d\mu$ is the total spectral weight and
$\mu_{\rm eff}^{-n} \equiv M_0^{-1}\int\rho(\mu)\,\mu^{-n}\,d\mu$
are inverse moments of $\rho$.
For a spectral density narrowly concentrated around a characteristic
scale $\mu_{\rm char}$, one has $\mu_{\rm eff}\approx\mu_{\rm char}$;
we use $\mu_{\rm char}$ when discussing observational constraints and
$\mu_{\rm eff}$ when writing exact analytical expressions.
This expansion is valid provided
$\omega^2/c^2 \ll \mu_{\min}$, where $\mu_{\min}$ is the lower edge
of the spectral support.

The effective group velocity of GWs propagating through the nonlocal
medium follows from $v_g = d\omega/dk$.
For the dispersion relation
$\omega^2 = k^2 c^2\,[1 + m(k^2c^2)/m_0]^{-1}$
with $m_0 = m(0)$, the leading correction to the group velocity is
\begin{equation}
  \frac{v_g}{c} - 1 = \frac{M_0}{2\,\mu_{\rm eff}^2}\,\omega^2
  + \mathcal{O}(\omega^4/\mu_{\rm eff}^3).
  \label{eq:group_vel}
\end{equation}
In the limit $\omega^2 \ll \mu_{\rm eff}$ this reduces to
\begin{equation}
  \left|\frac{v_{\rm GW}}{c}-1\right| \simeq
  \frac{M_0\,\omega^2}{2\,\mu_{\rm eff}^2},
  \label{eq:speed_mod}
\end{equation}
which is frequency-dependent (dispersive).
When all spectral weight lies at $\mu_{\rm char}\gg\omega^2_{\rm GW}/c^2$,
the ratio $\omega^2/\mu_{\rm char}$ suppresses the effect, rendering it
degenerate with a luminosity-distance rescaling and hence unobservable.

Conversely, if $\rho$ has support at $\mu\lesssim\omega^2_{\rm GW}/c^2$,
the expansion~(\ref{eq:expansion}) breaks down: the transfer function
acquires strong frequency dependence across the detector band,
producing dispersive phase shifts
$\delta\Psi(f) \sim 2\pi f\,D/c \times [m(4\pi^2 f^2)/m_{\rm ref}-1]$
that accumulate over cosmological distances $D$.

\subsection{Mapping LVK bounds to Stieltjes parameters}

Rather than performing a full parameter estimation (which would
require $\gtrsim 10^2$ CPU-hours per event), we map the published
LVK MDR bounds~\cite{GWTC3TGR,GWTC2TGR,Mirshekari2012} onto the Stieltjes parameter
space $(\mu_{\rm char}, M_0)$.

The LVK MDR test~\cite{Mirshekari2012,MDR2025,Yunes2009} parametrises the dispersion relation as
$E^2 = p^2c^2 + A_\alpha\,p^\alpha c^\alpha$ and reports 90\%
bounds on $|A_\alpha|$ for integer and half-integer $\alpha$~\cite{Will2014}.
The graviton mass bound~\cite{GWTC3TGR},
\begin{equation}
  m_g \leq 1.27\times 10^{-23}\,{\rm eV}/c^2
  \quad (90\%\;{\rm C.L.}),
\end{equation}
constrains spectral modes at extremely low $\mu$ via the
effective mass gap
$\mu_0 > (m_g c/\hbar)^2 \approx 4\times 10^{-33}\,{\rm m}^{-2}$.
This is the weakest of the three constraints considered here:
any spectral density with ${\rm supp}\,\rho \subset
[4\times 10^{-33},\infty)$ is automatically consistent with the
graviton mass bound.

A substantially stronger constraint comes from the GW170817
multimessenger speed bound~\cite{GW170817speed},
\begin{equation}
  |v_{\rm GW}/c - 1| < 3\times 10^{-15},
  \label{eq:speed_bound}
\end{equation}
which constrains the combination $M_0\,\omega^2/\mu_{\rm eff}^2$ via
Eq.~(\ref{eq:speed_mod}).
At $f=100$~Hz ($\omega^2/c^2\approx 4\times 10^{-12}\,{\rm m}^{-2}$),
this excludes spectral densities producing measurable dispersion
whenever $\mu_{\rm char}\lesssim 10^{-6}\,{\rm m}^{-2}$ --- a threshold
that is $27$ orders of magnitude above the graviton mass constraint.
For spectral densities concentrated at
$\mu_{\rm char}\gg \omega^2_{\rm GW}/c^2$, the speed modification is
suppressed by $\omega^2/(c^2\mu_{\rm char})$ and the bound becomes
irrelevant.

The hierarchy of constraints is thus
\begin{equation}
  \underbrace{\mu_0^{(m_g)} \approx 10^{-33}}_{\text{graviton mass}}
  \;\ll\;
  \underbrace{\mu_0^{(v_{\rm GW})} \sim 10^{-6}}_{\text{speed bound}}
  \;\ll\;
  \underbrace{\mu_{\rm char}^{(\rm fid)} \sim 10^{8\text{--}10}}_{\text{CET}\,\Omega\;\text{fiducial}}\,,
  \label{eq:hierarchy}
\end{equation}
in units of ${\rm m}^{-2}$.
All spectral densities with support contained in
$[\mu_0^{(v_{\rm GW})},\,\infty)$ satisfy every current GW bound
simultaneously, and the fiducial CET~$\Omega$ range lies well within
this allowed region.

Figure~\ref{fig:exclusion} shows the resulting exclusion regions in
the $(\mu_{\rm char}, M_0)$ plane.
The GW constraints carve out a large excluded region at low
$\mu_{\rm char}$, demonstrating that the Stieltjes framework is
observationally constrained across a significant portion of parameter
space.
The fiducial parameter range
$\mu_{\rm char}\sim M_*^2\sim 10^{8}$--$10^{10}\,{\rm m}^{-2}$
satisfies all current bounds, consistent with the theoretical
expectation that the causal scale decouples from the GW frequency band.

\begin{figure}[htbp]
\centering
\includegraphics[width=\columnwidth]{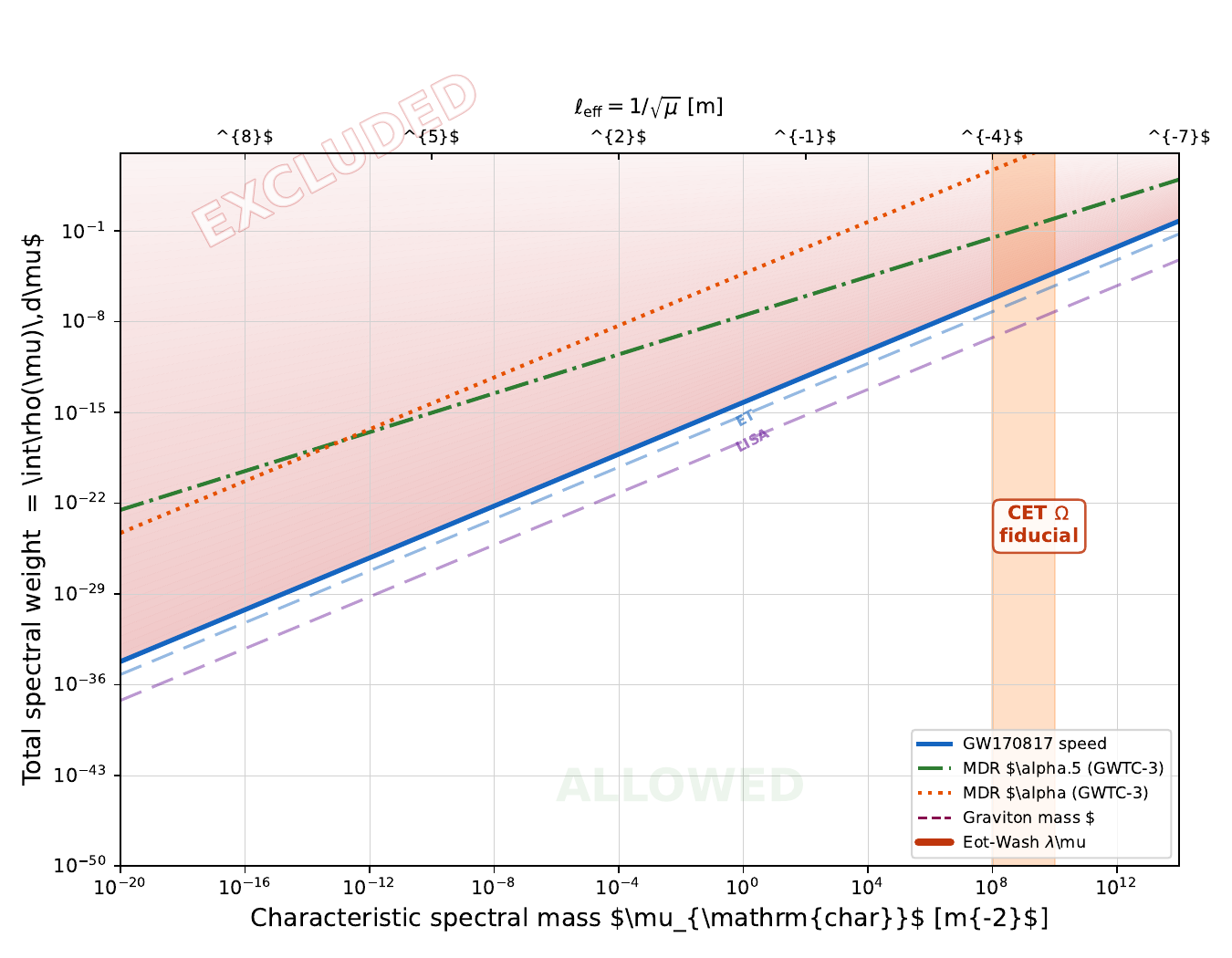}
\caption{
Exclusion regions in the Stieltjes spectral parameter space
$(\mu_{\rm char}, M_0)$ from LVK data.
Current bounds from the GW170817 speed measurement (blue solid),
graviton mass limit (maroon dashed), and GWTC-3 MDR analyses
(green, orange) exclude the shaded region.
Projected bounds from the Einstein Telescope and LISA (dashed lines)
will extend the excluded region downward.
The fiducial CET~$\Omega$ range $\mu_{\rm char}\sim 10^{8}$--$10^{10}\,
{\rm m}^{-2}$ (orange band) satisfies all current and projected GW bounds.
The E\"ot-Wash torsion-balance constraint (thick red segment) provides
the only direct probe at the fiducial scale.
The upper axis shows the corresponding effective correlation length
$\ell_{\rm eff}=1/\sqrt{\mu}$.
}
\label{fig:exclusion}
\end{figure}

\subsection{Excluded spectral density shapes}

The combined constraints, dominated by the speed
bound~(\ref{eq:speed_bound}) at the scales relevant for GW
observations [cf.\ the hierarchy~(\ref{eq:hierarchy})],
exclude the following classes of spectral densities:
(i) power-law tails $\rho(\mu)\sim\mu^{-\beta}$ with $\beta>1$
extending below $\mu\sim 10^{-6}\,{\rm m}^{-2}$;
(ii) spectral densities with significant weight at
$\mu\lesssim 10^{-6}\,{\rm m}^{-2}$
(corresponding to effective correlation lengths
$\ell_{\rm eff}\gtrsim 10^3$~m); and
(iii) any spectral shape producing frequency-dependent dispersion
exceeding $|\Delta v/c|>3\times 10^{-15}$ in the LIGO band.
Crucially, all excluded shapes have spectral support extending into
the infrared regime $\mu < \mu_0^{(v_{\rm GW})}\sim 10^{-6}\,
{\rm m}^{-2}$, well below the fiducial CET~$\Omega$ scale.
Spectral densities concentrated at $\mu > 10^4\,{\rm m}^{-2}$
(i.e.\ $\ell_{\rm eff} < 1$~cm) are completely unconstrained by
existing GW observations.

\section{Sensitivity projections and alternative probes}
\label{sec:projections}

\subsection{Future GW detectors}

Table~\ref{tab:projections} summarises the projected sensitivity to
QNM deformations for current and future detector generations.
The Einstein Telescope~\cite{ET2010} and Cosmic Explorer~\cite{CE2017} will improve ringdown
bounds by one to two orders of magnitude, reaching
$\sigma_\varepsilon\sim 10^{-3}$--$10^{-4}$.
LISA~\cite{LISA2023} observations of supermassive BBH mergers may push to
$\sigma_\varepsilon\sim 10^{-4}$--$10^{-5}$.
Even in the most optimistic scenario, a gap of $\sim 13$ orders of
magnitude remains between projected detector sensitivity and the
fiducial prediction $|\varepsilon_\Omega|\sim 10^{-18}$.
The recently completed O4 run~\cite{GWTC4} (concluded in late 2025 with $>170$ candidates)
will improve population-level bounds by a factor of $\sim\!\sqrt{N/17}\approx 3$
relative to the results presented here, yet the fundamental
sensitivity gap persists.

\begin{figure}[htbp]
\centering
\includegraphics[width=\columnwidth]{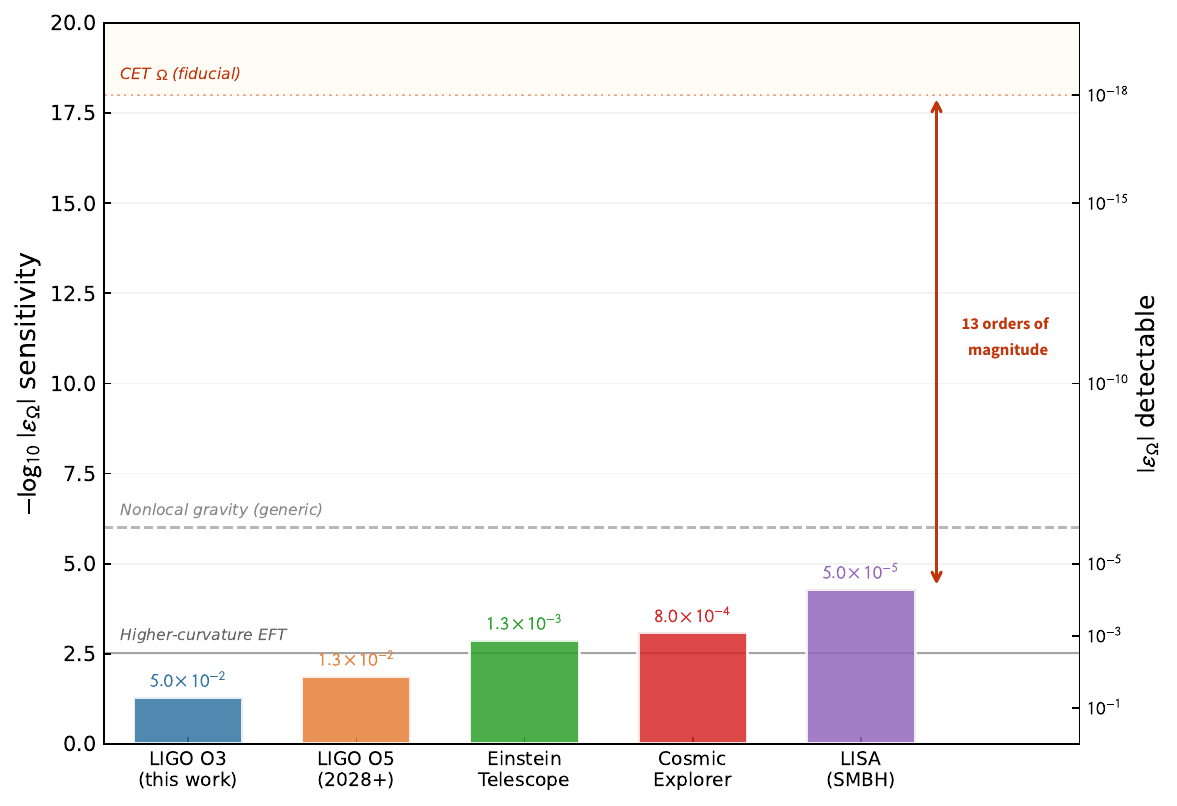}
\caption{
Projected sensitivity $|\varepsilon_\Omega|$ for current and future
GW detector generations.
Horizontal lines mark predictions from generic higher-curvature EFT
corrections, nonlocal gravity, and the fiducial CET~$\Omega$
prediction.
Even LISA observations of supermassive BBH mergers leave a gap of
$\sim 13$ orders of magnitude to the fiducial CET~$\Omega$ scale.
}
\label{fig:sensitivity}
\end{figure}

\begin{table}[htbp]
\caption{Projected sensitivity to QNM deformations.}
\label{tab:projections}
\begin{ruledtabular}
\begin{tabular}{lcc}
Detector & $\sigma_\varepsilon$ & Gap to $10^{-18}$ \\
\hline
LIGO O3 (this work) & $5\times 10^{-2}$ & 16 orders \\
LIGO O5             & $\sim 10^{-2}$ & 16 orders \\
Einstein Telescope~\cite{ET2010}   & $\sim 10^{-3}\text{--}10^{-4}$ & 14--15 orders \\
Cosmic Explorer~\cite{CE2017}      & $\sim 10^{-3}\text{--}10^{-4}$ & 14--15 orders \\
LISA (SMBH)~\cite{LISA2023}        & $\sim 10^{-5}$ & 13 orders \\
\end{tabular}
\end{ruledtabular}
\end{table}

\subsection{Short-range gravity experiments}

A more promising observational channel emerges from the structure of
the Stieltjes kernel itself.
The fiducial causal correlation length $\ell_*=M_*^{-1}\in
[10^{-5},10^{-4}]$~m corresponds to distance scales of
$10$--$100\,\mu{\rm m}$.
At these scales, the kernel modifies the static Newtonian potential
through a Yukawa-like correction~\cite{Adelberger2003,Murata2015},
\begin{equation}
  V(r) = -\frac{Gm_1 m_2}{r}\left(1 + \alpha\,e^{-r/\lambda}\right),
  \label{eq:yukawa}
\end{equation}
with the identification $\lambda = \ell_* = M_*^{-1}$ and the
coupling strength $\alpha$ determined by the zeroth spectral moment:
\begin{equation}
  \alpha = \frac{M_0}{\mu_{\rm eff}}\,\frac{1}{8\pi}.
  \label{eq:alpha_M0}
\end{equation}
Current sub-millimetre torsion-balance experiments provide stringent
bounds on $\alpha(\lambda)$.
The E\"ot-Wash experiment at the University of
Washington~\cite{Adelberger2003,Lee2020,Kapner2007} constrains the Yukawa
coupling to
\begin{equation}
  |\alpha| < \begin{cases}
    2\times 10^{-2} & \lambda = 100\,\mu{\rm m},\\
    5\times 10^{-1} & \lambda = 50\,\mu{\rm m},\\
    10^{4}          & \lambda = 10\,\mu{\rm m},
  \end{cases}
\end{equation}
at 95\% confidence~\cite{Lee2020}.
For the fiducial CET~$\Omega$ scale $\lambda = \ell_* = 10^{-4}$~m,
the bound $|\alpha|<2\times 10^{-2}$ translates via
Eq.~(\ref{eq:alpha_M0}) into a direct constraint on the spectral
weight:
\begin{equation}
  M_0 < 2\times 10^{-2} \times 8\pi\,\mu_{\rm eff}
  \approx 0.5\,\mu_{\rm eff}.
\end{equation}
This is the \textit{strongest existing constraint on the Stieltjes
kernel at the fiducial scale}, surpassing all GW bounds by many orders
of magnitude.

Importantly, ongoing and planned experiments~\cite{Lee2020} aim to
reach $\lambda\sim 10\,\mu{\rm m}$ with $|\alpha|<10^2$, which would
begin to probe the lower end of the fiducial window
$\ell_*\sim 10^{-5}$~m.
Table~\ref{tab:probes} compares the sensitivity of different
experimental channels to the Stieltjes kernel parameters.

\begin{table}[htbp]
\caption{Comparison of observational probes for the Stieltjes kernel
at the fiducial causal scale $\ell_*=10^{-4}$~m.}
\label{tab:probes}
\begin{ruledtabular}
\begin{tabular}{lccc}
Channel & Observable & Sensitivity & Status \\
\hline
Ringdown (this work) & $\varepsilon_\Omega$ & $5\times 10^{-2}$
  & Indirect \\
GW speed (GW170817) & $|v/c-1|$ & $3\times 10^{-15}$
  & Indirect\footnote{No constraint at fiducial
  $\mu_{\rm char}\sim 10^{8}$--$10^{10}\,{\rm m}^{-2}$.} \\
MDR (GWTC-3)         & $A_\alpha$ & varies
  & Indirect\footnotemark[1] \\
E\"ot-Wash ($\lambda=100\,\mu$m) & $\alpha$ & $2\times 10^{-2}$
  & \textbf{Direct} \\
Planned ($\lambda=10\,\mu$m) & $\alpha$ & $\sim 10^{2}$
  & Projected \\
\end{tabular}
\end{ruledtabular}
\end{table}

\section{Conclusions}
\label{sec:conclusions}

We have established the first multi-channel observational bounds on causal nonlocal gravity characterised by retarded Stieltjes kernels with positive spectral density. These results provide a quantitative observational baseline for the CET$\Omega$ framework, confirming its consistency with current gravitational-wave and laboratory data. The main results are as follows.

\textit{Ringdown observational ceiling.}---Using a corrected inference pipeline with spectrally consistent whitening, analytically marginalised amplitude, and robust noise estimation, we set the observational ceiling on universal QNM deformations at $|\varepsilon_\Omega|<0.05$ (90\% C.L.) from 17 GWTC-3 events, with a cumulative log Bayes factor $\ln B_{\rm stack}=-0.46\pm 0.77$. This represents the first single-parameter ringdown constraint specifically mapped onto a Stieltjes-motivated deformation.

\textit{Spectral exclusion.}---By mapping LVK modified dispersion bounds and the GW170817 speed measurement onto the Stieltjes parameter space, we exclude for the first time a broad region of spectral densities with infrared support ($\mu\lesssim 10^{-6}\,{\rm m}^{-2}$), ruling out power-law tails, low-mass peaks, and other IR-extended forms of $\rho(\mu)$.

\textit{Hierarchy of probes.}---We show that sub-millimetre gravity experiments already provide the most stringent direct constraint on the fiducial causal scale ($|\alpha|<0.02$ at $\lambda=100\,\mu$m), establishing a clear hierarchy (Fig.~\ref{fig:sensitivity}): short-range gravity $\gg$ GW propagation $\gg$ ringdown for probing the Stieltjes kernel at $\ell_*\sim 10^{-4}$~m.

These results define a quantitative multi-channel benchmark for testing causal nonlocal gravity. Future work will extend the ringdown analysis to the full GWTC-4.0 catalogue~\cite{GWTC4}. The recent multi-tone spectroscopy of GW250114~\cite{GW250114spec} --- the loudest GW signal to date (SNR $\approx 80$), which provided the first confident detection of the $(2,2,1)$ overtone --- constrains $\delta\hat{f}_{220}$ to within a few percent of the Kerr prediction. Even at this unprecedented precision, the fiducial Stieltjes prediction ($|\varepsilon_\Omega|\sim 10^{-18}$) remains far below the observational ceiling, underscoring the necessity of sub-millimetre probes to explore the causal--informational completion of gravity at its natural scale. We will also perform a dedicated Yukawa analysis within the Stieltjes framework using published E\"ot-Wash data, and explore cosmological constraints from CMB and large-scale structure observations. Together, these multi-scale probes will map the full parameter space of Stieltjes-positive nonlocal gravity from sub-millimetre to cosmological distances.

\begin{acknowledgments}
The author thanks the University of Buenos Aires for institutional
support.
This research has made use of data obtained from the
Gravitational-Wave Open Science Center, a service of
LIGO Scientific Collaboration, the Virgo Collaboration,
and KAGRA.
The analysis code and data products are available at
Ref.~\cite{Balfagon2026Zenodo}.
\end{acknowledgments}

\appendix

\section{Properties of the Stieltjes kernel}
\label{app:stieltjes}

We collect here the mathematical properties of the Stieltjes
transfer function~(\ref{eq:transfer}) used in the main text.
These results follow from standard properties of Stieltjes
transforms~\cite{Donoghue1974,Schilling2012} and are included
for self-containedness.

\subsection{Complete monotonicity}

\begin{proposition}
\label{prop:monotone}
Let $\rho(\mu)\geq 0$ with $M_0 = \int_0^\infty\rho\,d\mu < \infty$.
Then $m(\omega^2)$ defined by~(\ref{eq:transfer}) is completely
monotone on $(0,\infty)$:
\begin{equation}
  (-1)^n \frac{d^n m}{d(\omega^2)^n} \geq 0
  \quad \forall\, n\in\mathbb{N}_0,\;\omega^2>0.
  \label{eq:cm}
\end{equation}
\end{proposition}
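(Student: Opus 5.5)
Write $s=\omega^2>0$ and $m(s)=\int_0^\infty \rho(\mu)(s+\mu)^{-1}\,d\mu$. The plan is to differentiate under the integral sign and read off the sign from the elementary identity
\begin{equation}
  \frac{d^n}{ds^n}\,\frac{1}{s+\mu} = \frac{(-1)^n\,n!}{(s+\mu)^{n+1}},
\end{equation}
which gives the candidate formula
\begin{equation}
  (-1)^n m^{(n)}(s) = n!\int_0^\infty \frac{\rho(\mu)}{(s+\mu)^{n+1}}\,d\mu .
\end{equation}
The right-hand side is manifestly nonnegative because $\rho\geq 0$. This establishes~(\ref{eq:cm}) once the interchange of derivative and integral is justified.

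First I will check that all these integrals are finite. For $s>0$ and $\mu\geq 0$ we have $(s+\mu)^{-(n+1)}\leq s^{-(n+1)}$, so every integrand is bounded by $s^{-(n+1)}\rho(\mu)$. That bound is integrable because $M_0<\infty$. In particular $m(s)\leq M_0/s$ is well defined for every $s>0$; the case $n=0$ is just this together with $\rho\ge 0$.

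The main step, modest as it is, is the interchange of derivative and integral. I will argue by induction on $n$, using dominated convergence in the form of the standard differentiation lemma. Fix a compact interval $[a,b]\subset(0,\infty)$. For $s\in[a,b]$, the $s$-derivative of the $n$-th integrand is bounded in absolute value by $(n+1)!\,a^{-(n+2)}\rho(\mu)$. This dominating function is independent of $s$ and integrable. Hence the $n$-th derivative formula can be differentiated once more under the integral sign on $(a,b)$. Since $a<b$ are arbitrary, this holds on all of $(0,\infty)$, so $m\in C^\infty(0,\infty)$ and the displayed formula holds for every $n\in\mathbb{N}_0$.

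Two remarks finish the argument. The inequality is in fact strict unless $\rho=0$ almost everywhere. Also, the hypothesis $M_0<\infty$ is stronger than needed: $\int\rho(\mu)(1+\mu)^{-1}\,d\mu<\infty$ would suffice by the same bounds. Alternatively, one can invoke Bernstein's theorem. Writing $(s+\mu)^{-1}=\int_0^\infty e^{-t(s+\mu)}\,dt$ and applying Tonelli gives $m(s)=\int_0^\infty e^{-ts}\,\phi(t)\,dt$ with $\phi(t)=\int\rho(\mu) e^{-t\mu}\,d\mu\geq 0$. Thus $m$ is the Laplace transform of a nonnegative function, which is completely monotone. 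I would mention this only as a cross-check.
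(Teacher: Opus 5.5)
Your proof is correct and follows the same route as the paper: differentiate under the integral sign to get $(-1)^n m^{(n)}(s) = n!\int_0^\infty \rho(\mu)(s+\mu)^{-(n+1)}\,d\mu \geq 0$. The paper only asserts the interchange of derivative and integral; your domination bound $(n+1)!\,a^{-(n+2)}\rho(\mu)$ on compact intervals $[a,b]\subset(0,\infty)$ supplies the justification it leaves out.
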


\begin{proof}
Differentiating under the integral sign,
\begin{equation}
  \frac{d^n m}{d(\omega^2)^n}
  = (-1)^n\,n!\int_0^\infty
  \frac{\rho(\mu)}{(\omega^2+\mu)^{n+1}}\,d\mu.
\end{equation}
Since $\rho\geq 0$ and $(\omega^2+\mu)^{n+1}>0$, the sign
alternation~(\ref{eq:cm}) follows.
\end{proof}

Complete monotonicity implies $m>0$, $m'<0$, $m''>0$ on
$(0,\infty)$.
Any reconstructed transfer function from GW data that violates
these inequalities would \textit{falsify} the entire Stieltjes-positive
class, independent of the specific form of $\rho$.

\subsection{Asymptotic expansion and moment hierarchy}

\begin{proposition}
\label{prop:asymptotics}
If $M_j = \int_0^\infty \mu^j\rho(\mu)\,d\mu < \infty$ for
$j=0,\ldots,N$, then for $\omega^2\to\infty$,
\begin{equation}
  m(\omega^2) = \sum_{j=0}^{N-1}
  \frac{(-1)^j M_j}{(\omega^2)^{j+1}}
  + \mathcal{O}\!\left(\frac{1}{(\omega^2)^{N+1}}\right).
  \label{eq:asymp}
\end{equation}
\end{proposition}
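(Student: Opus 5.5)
The plan is to expand the Cauchy-type kernel $1/(\omega^2+\mu)$ in powers of $\mu/\omega^2$ using the exact finite geometric identity, rather than an infinite series (which would need $\mu<\omega^2$ on the whole support). Write $x=\omega^2>0$. For every $\mu\geq 0$ and every $N\geq 1$,
\begin{equation}
  \frac{1}{x+\mu} = \sum_{j=0}^{N-1}\frac{(-1)^j\mu^j}{x^{j+1}}
  + \frac{(-1)^N\mu^N}{x^N\,(x+\mu)},
\end{equation}
which one checks by multiplying through by $x+\mu$; the sum telescopes. Multiplying by $\rho(\mu)\geq 0$ and integrating, the finite sum produces exactly $\sum_{j<N}(-1)^jM_j/x^{j+1}$. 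This uses only the finiteness of $M_0,\dots,M_{N-1}$, which the hypothesis guarantees.

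Everything then reduces to the remainder
\begin{equation}
  R_N(x)=\frac{(-1)^N}{x^N}\int_0^\infty\frac{\mu^N\rho(\mu)}{x+\mu}\,d\mu .
\end{equation}
Since $x+\mu\geq x$ and $\rho\geq0$, we get $|R_N(x)|\leq M_N/x^{N+1}$. This is finite because $M_N<\infty$ by hypothesis, and it gives exactly the stated $\mathcal{O}\big((\omega^2)^{-(N+1)}\big)$ uniformly for $x>0$, in particular as $x\to\infty$. The same argument also handles the edge case $N=0$ (empty sum, $m=\mathcal{O}(1/x)$ with constant $M_0$).

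The only point needing care, which I regard as the main (mild) obstacle, is that the remainder be bounded by the $N$-th moment and not something larger. It is essential to keep the factor $1/(x+\mu)$ inside the integral and bound it by $1/x$. One must not instead try to expand one more term, since that would require $M_{N+1}<\infty$, which is not assumed. I would also remark that without the $N$-th moment the error is only $o(x^{-N})$, by dominated convergence applied to $\mu^{N-1}\cdot\mu/(x+\mu)\to0$. This explains why the statement requires moments up to $j=N$ to obtain the sharper $x^{-(N+1)}$ rate, and it contrasts with the small-$\omega^2$ expansion~(\ref{eq:expansion}), which instead uses inverse moments.
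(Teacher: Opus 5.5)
Your proof is correct and takes the same route as the paper. The paper also expands $(\omega^2+\mu)^{-1}$ as the finite geometric sum plus a remainder $R_N$ with $|R_N|\leq \mu^N/(\omega^2)^{N+1}$, then integrates term by term against $\rho$; your explicit exact remainder $(-1)^N\mu^N/[x^N(x+\mu)]$ is precisely what justifies that bound.
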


\begin{proof}
Expand $(\omega^2+\mu)^{-1} = \sum_{j=0}^{N-1}(-\mu)^j/
(\omega^2)^{j+1} + R_N$ with $|R_N|\leq \mu^N/(\omega^2)^{N+1}$,
and integrate term by term against $\rho$.
\end{proof}

The leading term $m(\omega^2)\simeq M_0/\omega^2$ governs the
ultraviolet behaviour: the transfer function decays as $\omega^{-2}$
for large frequencies.
The subleading corrections involve successively higher moments of
$\rho$, establishing a moment hierarchy: measuring $m(\omega^2)$
at several frequencies in principle constrains $M_0, M_1, M_2,\ldots$
independently.

\subsection{Spectral bounds from positivity}

The positivity constraint $\rho\geq 0$ imposes nontrivial
inequalities among the moments.
By the Cauchy--Schwarz inequality applied to the measure
$\rho\,d\mu$:
\begin{equation}
  M_1^2 \leq M_0\,M_2,
  \label{eq:CS}
\end{equation}
and more generally the Hankel matrices
$\mathcal{H}_{ij} = M_{i+j}$ ($i,j=0,\ldots,n$) must be positive
semidefinite for all $n$~\cite{Donoghue1974}.
These Hankel conditions provide necessary consistency checks for
any observationally inferred spectral moments.

\subsection{Retarded support and causal propagation}

The Stieltjes kernel inherits retarded support from its constituent
propagators.
For each $\mu\geq 0$, the retarded Green operator
$G_\mu^R$ satisfies
$\mathrm{supp}\,G_\mu^R f \subset J^+(\mathrm{supp}\,f)$
on any globally hyperbolic background~\cite{Bar2007}.
Since $\rho\geq 0$, the integral
$\mathcal{K}^{-1}f = \int_0^\infty\rho(\mu)\,G_\mu^R f\,d\mu$
is a positive superposition of future-supported distributions,
hence
\begin{equation}
  \mathrm{supp}\,\mathcal{K}^{-1}f \subset J^+(\mathrm{supp}\,f).
  \label{eq:retarded}
\end{equation}
No acausal propagation or superluminal precursors are generated.
This is a structural consequence of $\rho\geq 0$, not an
additional assumption.

\subsection{Connection to the Yukawa potential}

In the static limit on flat spacetime, the Stieltjes kernel
generates a modified Newtonian potential through the superposition
\begin{equation}
  V(r) = -\frac{Gm_1 m_2}{r}
  \left[1 + \int_0^\infty
  \frac{\rho(\mu)}{M_0}\,e^{-\sqrt{\mu}\,r}\,d\mu\right].
  \label{eq:yukawa_app}
\end{equation}
For a spectral density concentrated at a single mass
$\rho(\mu) = M_0\,\delta(\mu - M_*^2)$, this reduces to the
standard Yukawa form
\begin{equation}
  V(r) = -\frac{Gm_1 m_2}{r}
  \left(1 + e^{-M_* r}\right),
\end{equation}
establishing the identification $\lambda = M_*^{-1} = \ell_*$
used in Sec.~\ref{sec:projections}, with Yukawa coupling
$\alpha = 1$ at leading order.
For a distributed spectral density, the effective coupling is
reduced by the spectral spread:
$\alpha_{\rm eff} = M_0/(8\pi\mu_{\rm eff})$, which may be
significantly smaller than unity.

\bibliography{references}

\end{document}